\newcommand{\set}[1]{\left\{#1\right\}}
\newcommand{\given}{\ \middle|\ }
\newcommand{\pinf}{{+\infty}}
\definecolor{darkgreen}{rgb}{0,0.3,0}
\newcommand{\play}[1]{\langle#1\rangle}
\newcommand{\allied}[1]{#1{\ \bmod\ 2}}
\newcommand{\operation}[1]{{\bf #1}}
\newcommand{\attr}{{\operation{Justify}}}
\newcommand{\PPG}{{\PG_P}}
\newcommand{\PG}{\mathcal{PG}}
\newcommand{\defwin}{{H_d}}
\newcommand{\Owns}{O}
\newcommand{\oppon}{{\bar{\alpha}}}
\newcommand{\ignore}[1]{}
\newcommand{\ignoreproof}[1]{}
\newcommand{\comments}[1]{}
\newcommand{\algsetup}{
\SetInd{0.45em}{0.45em}
\LinesNumbered
\SetKwProg{Fn}{Fn}{:}{}
\SetKwInOut{empty}{}
\SetKwData{is}{$\leftarrow$}
\SetKwInOut{Input}{  \xspace input}
\SetKwInOut{State}{  \xspace state}
\SetKwInOut{Output}{output}
\SetKwRepeat{Repeat}{do}{until}
\SetKwRepeat{DoWhile}{do}{while}}
\title{Justifications and a Reconstruction of Parity Game Solving Algorithms}
\author{
 Ruben Lapauw \thanks{\email{ruben.lapauw@cs.kuleuven.be}, Supported by the IWT Vlaanderen} \and 
 Maurice Bruynooghe \thanks{\email{maurice.bruynooghe@cs.kuleuven.be}} \and
 Marc Denecker \thanks{\email{marc.denecker@cs.kuleuven.be}}\ }
\institute{KU Leuven, Dept. of Computer Science, B-3001 Leuven, Belgium}
\institute{}
\begin{document}
\maketitle
\begin{abstract}
Parity games are infinite two-player games played on directed graphs.
Parity game solvers are used in the domain of formal
verification. This paper defines parametrized parity games and
introduces an operation, \attr{}, that determines a winning strategy for
a single node. By carefully ordering \attr{} steps, we reconstruct
three algorithms well known from the literature.
\end{abstract}

\section{Introduction}
Parity games are games played on a directed graph without leaves by
two players, Even (0) and Odd (1). A node has an owner (a player) and an integer
priority. A play is an infinite path in the graph where the owner of a node
chooses which outgoing edge to follow.  A play and its nodes is
won by Even if the highest priority that occurs infinitely often is
even and by Odd otherwise. A parity game is solved when the winner of
every node is determined and proven.

Parity games are relevant for boolean equation
systems~\cite{mCRL2_10.1007/978-3-642-36742-7_15,DBLP:journals/corr/abs-1210-6414},
temporal logics such as LTL, CTL and
CTL*~\cite{DBLP:conf/dagstuhl/2001automata} and µ-calculus~\cite{DBLP:conf/stacs/Walukiewicz96,DBLP:conf/dagstuhl/2001automata}.
Many problems in these domains can be reduced to solving a parity game.
Quasi-polynomial time algorithm for solving them  exist
\cite{DBLP:conf/stoc/CaludeJKL017,DBLP:conf/spin/FearnleyJS0W17,DBLP:conf/mfcs/Parys19}.
However, all current state-of-the-art algorithms (Zielonka's
algorithm~\cite{DBLP:journals/tcs/Zielonka98},
strategy-improvement~\cite{DBLP:conf/csl/Schewe08},
priority
promotion~\cite{DBLP:journals/iandc/BenerecettiDM18,DBLP:conf/cav/BenerecettiDM16,DBLP:conf/hvc/BenerecettiDM16}
and tangle learning~\cite{DBLP:conf/cav/Dijk18}) are exponential.

We start the paper with a short description of the role of parity game
solvers in the domain of formal verification
(Section~\ref{sec:verif}).
In Section~\ref{sec:ppg}, we recall the essentials of parity games
and introduce parametrized parity games as a generalization of parity
games.
In Section~\ref{sec:just} we recall justifications, which we
introduced in \cite{DBLP:conf/vmcai/LapauwBD20} to store winning
strategies and to speed up algorithms.
Here we introduce safe justifications and define a \attr{} operation
and proof its properties.
Next, in Section~\ref{sec:recursive}, we reconstruct three algorithms
for solving parity games by defining different orderings over \attr{}
operations.  We conclude in Section~\ref{sec:conclusion}.
\section{Verification and parity game solving}\label{sec:verif}
Time logics such as LTL are used to express properties of interacting
systems. Synthesis consists of extracting an implementation with the
desired properties. Typically, formulas in such logics are handled by reduction
to other formalisms.  LTL can be reduced to
Büchi-automata~\cite{DBLP:conf/lics/VardiW86,DBLP:conf/cav/KestenMMP93}, determinized with
Safra's construction~\cite{DBLP:conf/focs/Safra88}, and transformed to
parity games~\cite{DBLP:conf/lics/Piterman06}.  Other modal logics
have similar reductions, CTL* can be reduced to
automata~\cite{DBLP:conf/cav/BernholtzVW94}, to
µ-calculus~\cite{DBLP:journals/tcs/CranenGR11}, and recently to
LTL-formulae~\cite{DBLP:journals/corr/abs-1711-10636}. All are
reducible to parity games.

One of the tools that support the synthesis of implementations for
such formulas is
Strix~\cite{DBLP:journals/acta/LuttenbergerMS20,DBLP:conf/cav/MeyerSL18},
one of the winners of the
SyntComp~2018~\cite{DBLP:journals/corr/abs-1904-07736} and
SyntComp~2019 competition. It reduces LTL formulas on the fly to
parity games. A game has three possible outcomes: (i) the parity game
needs further expansion,
(ii) the machine wins the game, i.e., an implementation is feasible,
(iii) the environment wins, i.e., no implementation exists.
Strix also extracts an implementation with the specified behaviour, e.g., as a Mealy machine. 

Consider a formula based on the well-known dining philosophers problem:
\begin{equation}
\label{eq:philo}
\begin{array}{lll}
G (\mathit{hungry}_A \Rightarrow F \mathit{eat}_A) \land && \text{If A is hungry, he will eventually eat}\\
G (\mathit{hungry}_B \Rightarrow F \mathit{eat}_B) \land && \text{If B is hungry, he will eventually eat}\\
G (\lnot eat_A \lor \lnot eat_B) && \text{A and B cannot eat at the same time.} \\
\end{array}
\end{equation}
Here $(G\phi)$ means $\phi$ holds in every future trace and $(F \phi)$
means $\phi$ holds in some future trace where a trace is a succession
of states.

Strix transforms the LTL-formula~\ref{eq:philo} to the parity game of
Figure~\ref{fig:philo}. The
machine (Even) plays in the square nodes and the environment (Odd) in
the diamond nodes.
By playing in state $b$ to $d$, and in state $f$ to $h$, Even wins every node as 2 is then the highest priority that occurs infinitely often in
every play. 
From the solution, Strix extracts a 2-state
Mealy machine (Figure~\ref{fig:mealy}). Its behaviour satisfies Formula~\ref{eq:philo}: both philosophers alternate eating regardless of their hunger.

\begin{figure}[tb]
\begin{minipage}{0.55\textwidth}
\centering{
\includegraphics[width=0.9\textwidth]{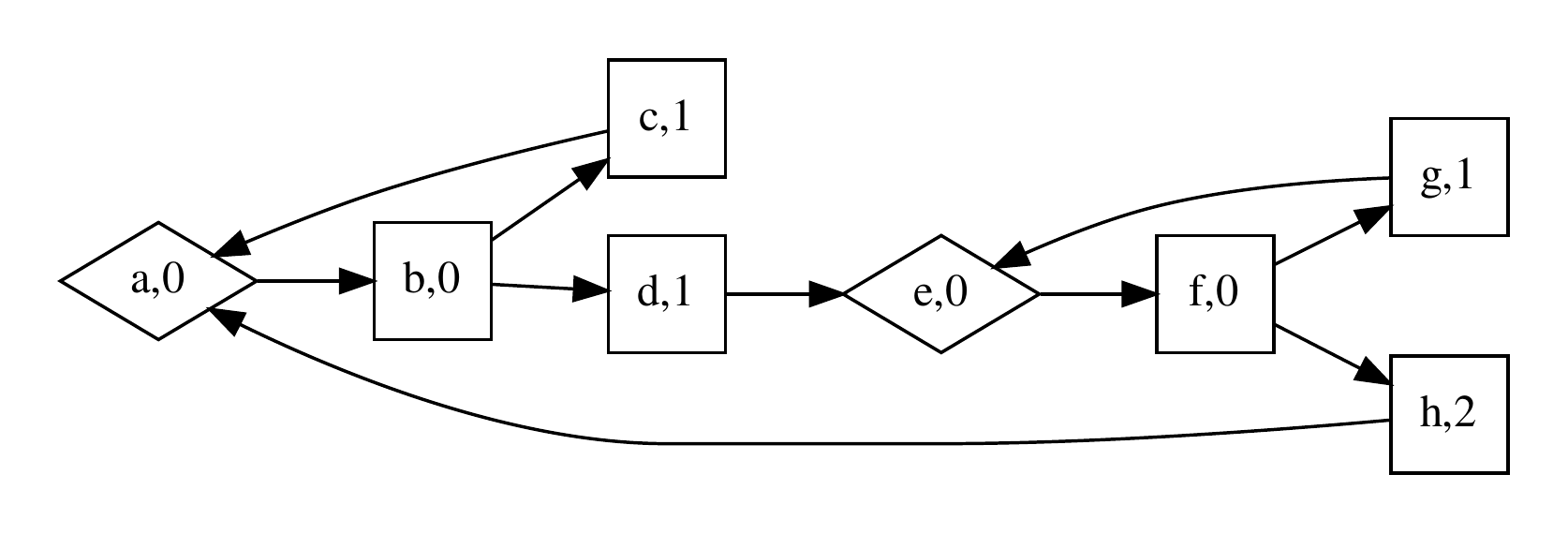}
\caption{A reduced parity game.\label{fig:philo}}
}
\end{minipage}\begin{minipage}{0.4\textwidth}
\centering{
\includegraphics[width=\textwidth]{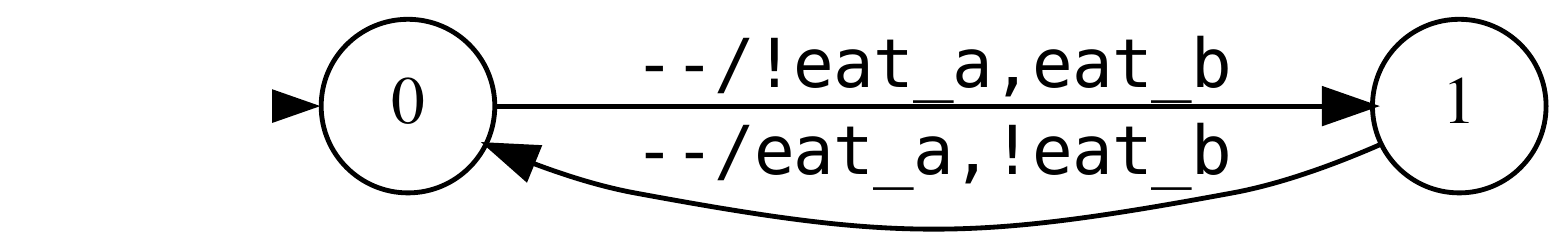}
\caption{The resulting Mealy machine with two
states, alternating $\neg eat_A, eat_B$ and $eat_A, \neg eat_B$ regardless of the input of $hungry_A$ and $hungry_B$.\label{fig:mealy}}
}
\end{minipage}
\end{figure}
\section{Parametrized parity games\label{sec:ppg}}

A parity game~\cite{mostowski1991games,DBLP:conf/focs/EmersonJ91,DBLP:conf/stacs/Walukiewicz96} is a two-player
game of player $0$ (Even) against $1$ (Odd). We use $\alpha \in
\set{0,1} $ to denote a player and $\oppon$ to denote its
opponent. Formally, we define a {\em parity game} as a tuple $\PG =
(V, E, \Owns, Pr)$ with $V$ the set of nodes, $E$ the set of possible
moves represented as pairs $(v,w)$ of nodes, $\Owns:V \to \{0,1\}$
the owner function, and $Pr$ the priority function $V \to \mathbb{N}$
mapping nodes to their priority; $(V,E)$ is also called the game
graph. Each $v\in V$ has at least one possible move. 
We use $\Owns_\alpha$ to denote nodes owned by $\alpha$.

A {\em play} (in node $v_1$) of the parity game  is an infinite
sequence of nodes $\play{v_1, v_2,\dots, v_n\dots}$ where $\forall i:
v_i \in V \land (v_i, v_{i+1}) \in E$. 
We use $\pi$ as a mathematical variable to denote a play. $\pi(i)$ is the $i$-th node $v_i$ of $\pi$.
In a play $\pi$, it is the owner of the node $v_i$ that decides the move $(v_i,v_{i+1})$.
There exists plays in every node.
We call the player $\alpha=(n\mod 2)$ the {\em winner of priority $n$}. 
The winner of a play is 
the winner of the highest priority $n$ through which the play passes infinitely often. Formally: 
$\mathit{Winner}(\pi) = \allied{\lim_{i \to \pinf} max\set{Pr(\pi(j)) \middle| j \geq i}}.$

The key questions for a parity game $\PG$ are, for each node $v$: Who is the winner? And how?
As proven by \cite{DBLP:conf/focs/EmersonJ91}, parity games are memoryless determined: every node has a unique winner and a corresponding memoryless winning strategy.
A (memoryless) strategy for player $\alpha$ is a partial function $\sigma_\alpha$ from a subset of $\Owns_\alpha$ to $V$. A play $\pi$ is consistent with $\sigma_\alpha$ if for every $v_i$ in $\pi$ belonging to the domain of $\sigma_\alpha$, $v_{i+1}$ is $\sigma_\alpha(v_i)$.
A strategy $\sigma_\alpha$ for player $\alpha$ is a {\em winning strategy} for a node $v$  if every play in $v$ consistent with this strategy is won by $\alpha$, i.e. regardless of the moves selected by $\oppon$.
As such, a game $\PG$ defines a winning function $W_\PG:V\mapsto
\{0,1\}$. 
The set $W_{\PG,\alpha}$ or, when $\PG$ is clear from the context,
$W_\alpha$ denotes the set of nodes won by $\alpha$.
Moreover, for both players $\alpha\in \{0,1\}$, there exists a memoryless winning  strategy $\sigma_\alpha$ with domain  $W_\alpha \cap \Owns_\alpha$ that wins in all nodes won by $\alpha$.  A {\em solution} of  $\PG$ consists of a function $W':V \to \{0,1\}$ and two winning strategies $\sigma_0$ and $\sigma_1$, with $dom(\sigma_\alpha)=W'_\alpha\cap \Owns_\alpha$, such that every play in $v\in W'_\alpha$ consistent with $\sigma_\alpha$ is won by $\alpha$. Solutions always exist; they may differ in strategy but all have $W'=W_{\PG}$, the winning function of the game. We can say that the pair $(\sigma_0,\sigma_1)$ proves that $W'=W_{\PG}$. 

In order to have a framework in which we can discuss different
algorithms from the literature, we define a parametrized parity
game. It consists of a parity game $\PG$ and a parameter function $P$,
a partial function $P:V\rightharpoonup \{0,1\}$ with domain $dom(P) \subseteq
V$. Elements of $dom(P)$ are called parameters, and $P$ assigns a
winner to each parameter. Plays are the same as in a $\PG$ except that
every play that reaches a parameter $v$ ends and is won by $P(v)$.

\begin{definition}[Parametrized parity game\label{def:ppg}]
  Let $\PG = (V, E, \Owns, Pr)$ be a parity game and $P:V\rightharpoonup \{0,1\}$ a partial function with domain $dom(P) \subseteq V$. Then $(\PG, P)$  is a {\em parametrized parity game} denoted $\PPG$,  with parameter set $dom(P)$. If $P(v)=\alpha$, we call $\alpha$ the assigned winner of parameter $v$. The sets $P_0$ and $P_1$ denote parameter nodes with assigned winner 0 respectively 1.  

A play of $(\PG, P)$ is a 
sequence of nodes $\play{v_0, v_1, \dots}$ such that for all $i$: if
$v_i \in P_\alpha$ then the play halts and is won by $\alpha$,
otherwise $v_{i+1}$ exists and $(v, v_{i+1}) \in E$. For infinite
plays, the winner is as in  the original parity game $\PG$. 
\end{definition}

Every parity game $\PG$ defines a class of parametrized parity games
(PPG's), one for each  partial function $P$. The original $\PG$
corresponds to one of these games, namely the one without parameters
($dom(P)=\emptyset$); every total function $P:V\to \{0,1\}$
defines a trivial PPG, with plays of length 0 and $P=W_\PPG$. 

A PPG $\PPG$ can be reduced to an equivalent PG $G$: in each parameter $v \in dom(P)$ replace the outgoing edges with a self-loop and the priority of $v$ with $P(v)$. We now
have a standard parity game $G$. Every infinite play $\play{v_0, v_1,
  \dots}$ in $\PPG$ is also an infinite play in $G$ with the same winner. 
Every finite play $\play{v_0, v_1, \dots, v_n}$ with winner $P(v_n)$ in $\PPG$
corresponds to an infinite play $\play{v_0, v_1, \dots, v_n,v_n,\ldots
}$ with winner $P(v_n)$ in $G$. Thus, the two games are equivalent. It follows that any PPG $\PPG$ is a zero-sum game defining a winning function $W$ and having memory-less winning strategies  $\sigma_\alpha$ with domain $(W_\alpha \setminus P_\alpha) \cap \Owns_\alpha$ (for $\alpha=0,1$).

PPG's allow us to capture the behaviour of several state of the
  art algorithms as a sequence of solved PPG's. In each step,
  strategies and parameters are modified and a solution for one PPG is
  transformed into a solution for a next PPG and this until a solution for the input
  PG is reached.

\begin{figure}[tb]
\begin{center}
\includegraphics[width=0.6\textwidth]{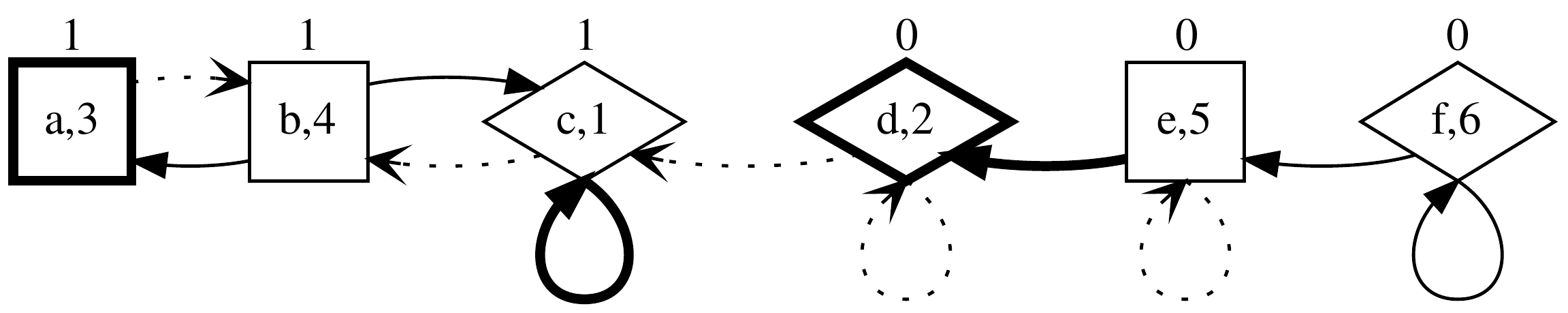}
\end{center}
\caption{A parametrized parity game with nodes $a,\dots,f$,  $P_0= \{d\}$ and $P_1 = \{a\}$, and winning strategies for $0$ and $1$. The two parameter nodes are in bold.  Square nodes are owned by Even, diamonds by
  	Odd.  The labels inside a node are the name  and priority; the label on top of a node is the winner. A bold edge belongs to a winning strategy (of the owner of its start node). A slim edge is one starting in a node that is lost by its owner. All remaining edges are dotted.\label{ex:game_para}
  }
\end{figure}

\begin{figure}[tb]
\begin{center}
\includegraphics[width=0.6\textwidth]{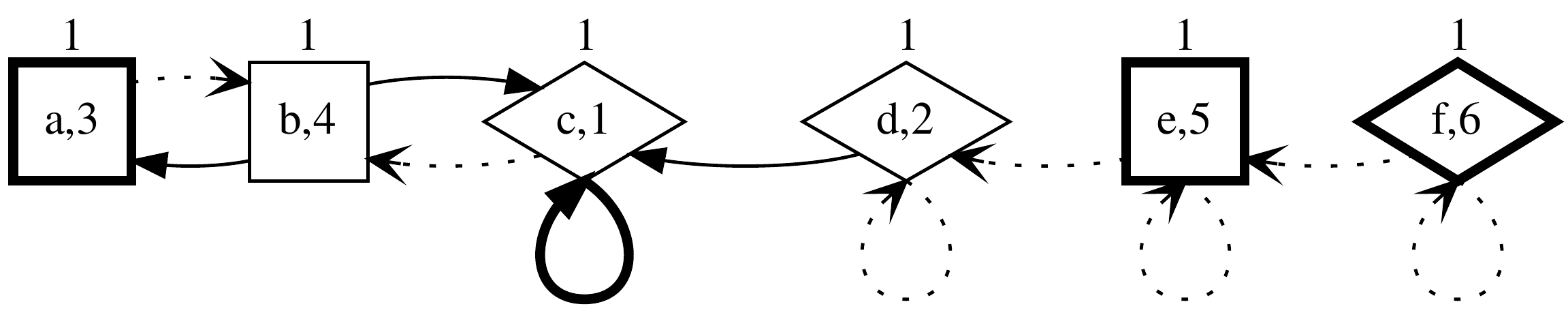}
\end{center}
\caption{{A parametrized parity game and strategy, after withdrawing $d$ from the parameter list.}
\label{ex:game_para_2}}
\end{figure}

\begin{example}\label{ex:first}
Figure~\ref{ex:game_para} shows a parametrized parity game and its
winning strategies.  The parameter nodes $a$ and $d$ are won by the
assigned winners, respectively 1 and 0.  Player 1 owns 
node $c$ and wins its priority. Hence, by playing
from $c$ to $c$, 1 wins in this node. Node $b$ is owned by 0 but
has only moves to nodes won by 1, hence it is also won by 1. Player 0
wins node $e$ by playing to node $d$; 1 plays in node $f$ but
playing to $f$ results in an infinite path won by 0, while playing
to node $e$ runs into a path won by 0, so $f$ is won by 0.

Based on this PPG, we can construct a solved PPG where node $d$ is removed from the parameters. The strategy is adjusted accordingly: Odd wins in $d$ by playing to $c$ . However, changing the winner of $d$ breaks the strategies and winners of the nodes $e$ and $f$. 
Figure~\ref{ex:game_para_2} shows one way to obtain a solved PPG with further adjustments: nodes $e$ and $f$ are turned into parameters won by $1$. Many other solutions exist, e.g., by turning $e$ into a parameter won by $0$.  
\end{example}
\newcommand{\dominion}{+\infty}
\newcommand{\ext}{{\operation{Extend}}}
\newcommand{\justify}{\ensuremath{\attr(J,v,dj)}}
\newcommand{\clo}{{\mathit{Close}}}
\newcommand{\attrt}{{\operation{Attract_{QD}}}}
\newcommand{\jl}{\ensuremath{jl}}
\newcommand{\DJN}{\mathit{DJ}}
\newcommand{\DJT}{\mathit{DJT}}
\newcommand{\djn}{{dj}}
\newcommand{\djt}{\djn}
\newcommand{\djl}{\mathit{djl}}
\newcommand{\defw}{\mathit{defwinner}}
\newcommand{\Ext}{\mathit{Par}}
\newcommand{\Esc}{\mathit{Esc}}

\newcommand{\reaching}[2]{#1{\downarrow_{#2}}}
\newcommand{\reachable}[2]{#1{\uparrow_{#2}}}

\section{Justifications\label{sec:just}}
In Figure~\ref{ex:game_para} and Figure~\ref{ex:game_para_2}, the
solid edges form the subgraph of the game graph that was analysed to
confirm the winners of all nodes.  We formalize this subgraph as a
{\em justification}, a concept introduced in
\cite{DBLP:journals/tplp/HouCD10} and described below.  In the rest of
the paper, we assume the existence of a parity game
$\PG=(V,E,\Owns,Pr)$ and a parametrized parity game $\PG_P=(\PG,P)$
with $P$ a parameter function with set of parameters $dom(P)$. Also,
we use $H: V \to \{0,1\}$ as a function describing a
``hypothesis'' of who is winning in the nodes.

\begin{definition}[Direct justification]\label{def:just_win}
  A {\em direct justification} $dj$ for player $\alpha$ to win node $v$ is a
  set containing one outgoing edge of $v$ if $\Owns(v)=\alpha$ and all
  outgoing edges of $v$ if $\Owns(v)= \oppon$.

  A direct justification $dj$ {\em wins $v$ for $\alpha$ under
    hypothesis $H$} if for all $(v,w)\in dj$, $H(w)=\alpha$.  We also
  say: {\em $\alpha$ wins $v$ by $dj$ under $H$}.
  \end{definition}

\begin{definition}[Justification]
  A  {\em justification} $J$ for $\PG$ is a tuple $(V,D,H)$ such that $(V,D)$ is a subgraph of $(V,E)$.
  If a node has outgoing edges in $D$, it is {\em justified} in $J$, otherwise it is {\em unjustified}.
\end{definition}

\begin{definition}[Weakly winning]
  A justification $(V,D,H)$ is {\em weakly winning} if for all justified nodes $v\in V$ the set of outgoing edges $Out_v$ is a direct justification that wins $v$ for $H(v)$ under $H$.
\end{definition} 

We observe that any justification $J = (V,D,H)$ determines a PPG  $\PG_{P_J}$ where the parameter function $P_J$ is the restriction of $H$ to unjustified nodes.

If $J$ is weakly winning, the set of edges $\{(v,w)\in D\mid
\Owns(v)=H(v)=\alpha\}$ is a partial function on $\Owns_\alpha$, i.e.,
a strategy for $\alpha$. We denote it as $\sigma_{J,\alpha}$.

\begin{proposition}\label{prop:weaklywinning}
	Assume  a weakly winning justification $J = (V,D,H)$. Then, (i) For every path $\pi$ in $D$, all nodes $v$ on $\pi$ have the same hypothetical winner $H(v)$. (ii) All finite paths $\pi$ starting in node $v$ in $D$ are won in $\PG_{P_J}$ by $H(v)$.  (iii) Every path in $D$ with nodes hypothetically won by $\alpha$ is consistent with $\sigma_{J,\alpha}$. (iv) Every play starting in $v$ of $\PG_{P_J}$  consistent with $\sigma_{J,H(v)}$ is a path in $D$.
\end{proposition}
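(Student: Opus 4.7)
The four items can be proved in the given order, since (ii) uses (i), (iii) uses the weak-winning definition directly, and (iv) uses (i) together with weak winning. The core observation to extract from the definition of weakly winning is: for every justified $v$ in $J$ and every edge $(v,w)\in D$, we have $H(w)=H(v)$, and moreover, if $\Owns(v)=\overline{H(v)}$, then \emph{every} outgoing $E$-edge of $v$ lies in $D$; if $\Owns(v)=H(v)$, then $v$ has exactly one outgoing edge in $D$.

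For (i), I would argue by induction on the length of the path. Any single edge $(v_i,v_{i+1})$ in $D$ has $v_i$ justified and belongs to the direct justification of $v_i$, so by weak winning $H(v_{i+1})=H(v_i)$. Propagating this equality along the path gives the claim.

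For (ii), I take a finite path $\pi$ in $D$ starting at $v$ that is a play of $\PG_{P_J}$. Since plays in $\PG_{P_J}$ only terminate at parameters, i.e., at nodes in $\mathrm{dom}(P_J)$ — which by construction of $P_J$ are exactly the unjustified nodes of $J$ — the last node $v_n$ of $\pi$ is unjustified, and $P_J(v_n)=H(v_n)$. By (i), $H(v_n)=H(v)$, so $\pi$ is won by $H(v)$ in $\PG_{P_J}$. For (iii), let $\pi$ be a path in $D$ all of whose nodes are hypothetically won by $\alpha$, and fix any edge $(v_i,v_{i+1})$ of $\pi$ with $\Owns(v_i)=\alpha$. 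Because $v_i$ has an outgoing edge in $D$ it is justified; since $H(v_i)=\Owns(v_i)=\alpha$, weak winning forces the direct justification at $v_i$ to consist of the single edge picked by $\sigma_{J,\alpha}$, which must therefore be $(v_i,v_{i+1})$. So $v_{i+1}=\sigma_{J,\alpha}(v_i)$, which is exactly consistency with $\sigma_{J,\alpha}$.

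For (iv), which I expect to be the main obstacle, I prove by induction on $i$ that for every play $\pi=\langle v_0=v,v_1,\dots\rangle$ of $\PG_{P_J}$ consistent with $\sigma_{J,H(v)}$, each prefix $v_0,\dots,v_i$ is a path in $D$ and $H(v_i)=H(v)=:\alpha$. The base case is immediate. For the step, if $v_i$ is unjustified then $v_i\in\mathrm{dom}(P_J)$ and the play already halted; otherwise $v_i$ is justified. If $\Owns(v_i)=\alpha$, then, as in (iii), $v_i\in\mathrm{dom}(\sigma_{J,\alpha})$ and consistency forces $v_{i+1}=\sigma_{J,\alpha}(v_i)$, hence $(v_i,v_{i+1})\in D$; weak winning then gives $H(v_{i+1})=\alpha$. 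The delicate case is $\Owns(v_i)=\overline{\alpha}$: here the opponent chooses $v_{i+1}$ from all outgoing $E$-edges, and we have no control via $\sigma_{J,\alpha}$. But weak winning under the hypothesis $H(v_i)=\alpha\neq\Owns(v_i)$ requires the direct justification to contain \emph{all} outgoing edges of $v_i$, so $(v_i,v_{i+1})\in D$ regardless of the opponent's move, and again $H(v_{i+1})=\alpha$. This closes the induction and shows the play is a path in $D$.
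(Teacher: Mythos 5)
Your proof is correct and follows essentially the same route as the paper's: (i) is propagated edge-by-edge from the weak-winning condition $H(v)=H(w)$ for $(v,w)\in D$; (ii) uses that a finite play ends in a parameter whose $P_J$-value agrees with $H$ along the path; (iii) observes that a justified node with $\Owns(v)=H(v)=\alpha$ has exactly the edge of $\sigma_{J,\alpha}$ in $D$; and (iv) inductively builds the path in $D$, splitting on the owner of the current node exactly as the paper does. Your write-up is somewhat more explicit (e.g., spelling out why $v_i\in\mathrm{dom}(\sigma_{J,\alpha})$ and handling the opponent-owned case via the all-outgoing-edges clause), but the underlying argument is the same.
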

\begin{proof}
 (i) Since any edge $(v,w)\in D$ belongs to a direct justification that wins $v$ for $H(v)$, it holds that $H(v)=H(w)$. It follows that every path $\pi$ in $D$ consists of nodes with the same hypothetical winner. (ii) If path  $\pi$ in $v$ is finite and ends in parameter $w$, then $H(v)=H(w)$. The winner of $\pi$ in $\PG_{P_J}$ is $P_J(w)$ which is equal to $H(v)$ as $H$ expands $P_J$. (iii) Every path in $D$ with hypothetical winner $\alpha$, follows $\sigma_{J,\alpha}$ when it is in a node $v$ with owner $\alpha$. (iv) Let $H(v)=\alpha$ and $\pi$ be a play in $v$ of $\PPG$  consistent with $\sigma_{J,\alpha}$. We can inductively construct a path from $v=v_1$ in $D$. It follows from (i) that the $n$'th node $v_n$ has $H(v_n)=H(v_1)=\alpha$. For each non-parameter node $v_n$, if $\Owns(v_n)=\alpha$, then $v_{i+1}=\sigma_{J,\alpha}(v_i)$ which is in $D$. If $\Owns(v_n)=\oppon$ then $D$  contains all outgoing edges from $v_n$ including the one  to  $v_{n+1}$. \qed
 \end{proof}

\begin{definition}[Winning]
  A justification $J = (V,D,H)$ is {\em winning} if (i) $J$ is weakly winning and (ii) all
  infinite paths $\langle v_1,v_2,\dots\rangle$ in $D$ are plays of
  $\PG$ won by $H(v_1)$.
\end{definition} 

Observe that, if $J$ is winning and $H(v)=\alpha$, all plays in $\PG_{P_J}$ starting in  $v$ and consistent with $\sigma_{(V,D,H),\alpha}$ are paths in
$(V,D)$ won by $\alpha$. Hence:
\begin{theorem}
  If $J=(V,D,H)$ is a winning justification for $\PG_{P_J}$ then $H$ is $W_{\PG_{P_J}}$, the
  winning function of $\PG_{P_J}$, with corresponding winning strategies $\sigma_{J, 0}$ and $\sigma_{J, 1}$.
\end{theorem}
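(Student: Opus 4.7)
The plan is to leverage Proposition~\ref{prop:weaklywinning} together with the winning condition on infinite paths, and then invoke the uniqueness of the winner in a PPG (established at the end of Section~\ref{sec:ppg}) to conclude that the hypothesis $H$ must coincide with the winning function $W_{\PG_{P_J}}$.

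First, I would fix a player $\alpha \in \{0,1\}$ and a node $v$ with $H(v) = \alpha$, and show that $\sigma_{J,\alpha}$ is a winning strategy for $\alpha$ from $v$ in $\PG_{P_J}$. Let $\pi$ be any play in $\PG_{P_J}$ starting in $v$ and consistent with $\sigma_{J,\alpha}$. By Proposition~\ref{prop:weaklywinning}(iv), $\pi$ is a path in the justification graph $(V,D)$. Now split cases on whether $\pi$ is finite or infinite. If $\pi$ is finite, it ends in a parameter $w \in dom(P_J)$; by Proposition~\ref{prop:weaklywinning}(ii) the winner is $H(v) = \alpha$, and since $P_J$ is the restriction of $H$ to unjustified nodes, this matches the PPG winner $P_J(w)$. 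If $\pi$ is infinite, then condition~(ii) of the definition of a winning justification gives directly that $\pi$ is a play of $\PG$ won by $H(v_1) = H(v) = \alpha$, which is also the winner of the infinite play in $\PG_{P_J}$ by Definition~\ref{def:ppg}. Hence $\alpha$ wins $v$ via $\sigma_{J,\alpha}$.

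Applying the above argument to both $\alpha = 0$ and $\alpha = 1$ yields that every node $v$ is won in $\PG_{P_J}$ by $H(v)$, with winning strategy $\sigma_{J, H(v)}$. Since each PPG is determined (each node has a \emph{unique} winner, as noted after the reduction of PPGs to ordinary parity games), $H$ must agree with $W_{\PG_{P_J}}$, and the strategies $\sigma_{J,0}$ and $\sigma_{J,1}$ are the corresponding winning strategies on their respective winning regions.

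There is no real obstacle here: the conceptual work was done in proving Proposition~\ref{prop:weaklywinning}, which already handles the bookkeeping about paths in $D$ and their relation to plays of $\PG_{P_J}$ consistent with $\sigma_{J,\alpha}$. The only thing to be careful about is verifying the domain condition $dom(\sigma_{J,\alpha}) = (W_\alpha \setminus P_\alpha)\cap \Owns_\alpha$, which follows from the definition of $\sigma_{J,\alpha}$ (edges $(v,w)\in D$ with $\Owns(v)=H(v)=\alpha$) once we know $H = W_{\PG_{P_J}}$ and observe that parameters are exactly the unjustified nodes.
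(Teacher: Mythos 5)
Your proof is correct and takes essentially the same route as the paper: the paper's entire argument is the one-sentence observation preceding the theorem, namely that for $H(v)=\alpha$ every play of $\PG_{P_J}$ from $v$ consistent with $\sigma_{J,\alpha}$ is a path in $(V,D)$ won by $\alpha$, which is exactly your combination of Proposition~\ref{prop:weaklywinning}(iv) with the finite case (Proposition~\ref{prop:weaklywinning}(ii)) and the infinite case (condition~(ii) of winning), followed by the determinacy of the PPG. You simply make explicit the case split and the appeal to uniqueness of winners that the paper leaves implicit.
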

The central  invariant of the algorithm presented below is that its data structure $J=(V,D,H)$ is a winning justification. Thus, in every stage, $H$ is the winning function of $\PG_{P_J}$ and the graph $(V,D)$ comprises winning strategies $\sigma_{J,\alpha}$ for both players. In a sense, $(V,D)$ provides a proof that $H$ is $W_{\PG_{P_J}}$. 

\subsection{Operations on weakly winning justifications}
We introduce an operation that modifies a justification $J=(V,D,H)$
and hence also the underlying game $\PG_{P_J}$. Let $v$ be a node in
$V$, $\alpha$ a player and $dj$ either the empty set or a direct
justification. We define $J[v:dj,\alpha]$ as the justification
$J'=(V,D',H')$ where $D'$ is obtained from $D$ by replacing the
outgoing edges of $v$ by the edges in $dj$, and $H'$ is the function
obtained from $H$ by setting $H'(v):=\alpha$.
Modifications for a set of nodes are independent of application
order. E.g., $J[v:\emptyset,H'(v)\mid v\in S]$  removes all
out-going edges of $v$ and sets $H'(v)$ for all $v\in S$. 
Multiple operations, like $J[v:dj,\alpha][v':dj',\alpha']$, are applied left to right.
Some useful instances, with their properties, are below.

In the proposition, a cycle in $J$ is a finite sequence of nodes following edges in $J$ that ends in its starting node. 

\begin{proposition}   \label{prop:operations}
For a weakly winning justification $J$ and a node $v$ with direct justification $dj$ the following holds:

(i) If $H(v)=\oppon$, $v$ has no incoming edges and $dj$ wins $v$ for $\alpha$ under $H$, then  $J[v:dj,\alpha]$ is weakly winning and there are no cycles in $J'$ with edges of $dj$. 
	 
	 (ii) Let $S$ be a set of nodes closed under incoming edges (if $v\in S$ and $(w,v)\in D$, then $w\in S$), let $H_f$ be an arbitrary function mapping nodes of $S$ to players. It holds that  $J[v:\emptyset,H_f(v) \mid v\in S]$ is weakly winning. There are no cycles in $J'$ with edges of $dj$.
	 
	 (iii) If $H(v)=\alpha$ and $dj$ wins $v$ for $\alpha$ under
  $H$,  then  $J[v:dj,\alpha]$ is weakly winning. There are no new
  cycles when $(v,v)\not\in dj$ and no $w\in range(dj)$ can reach $v$ in
  $J$. Otherwise new cycles pass through $v$ and have at least one
  edge in $dj$.
\end{proposition}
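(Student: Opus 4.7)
The plan is to treat the three items in turn. In each case I verify weak winning by checking the direct-justification condition at every justified node of $J'$, and then perform a focused cycle analysis on the edges that were added or whose endpoints had their hypothetical winner changed. The key generic observation is that $J'$ differs from $J$ only in the outgoing edges of the modified node(s) and in $H$ at exactly those nodes, so only justifications touching those nodes can possibly fail.

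For (i), I would first note that since $dj$ wins $v$ for $\alpha$ under $H$ while $H(v) = \oppon \neq \alpha$, the self-loop $(v,v)$ cannot lie in $dj$; combined with the hypothesis that $v$ has no incoming edges in $D$, no edge of $D'$ ends at $v$. Hence every justified $u \neq v$ has its outgoing edges unchanged, and the single-point change $H(v) \mapsto \alpha$ affects none of their targets (which all differ from $v$), so their direct justifications still hold under $H'$. At $v$ itself, the targets of $dj$ are likewise distinct from $v$, so $H'$ equals $H$ on them and they remain hypothetically won by $\alpha = H'(v)$. Cycle exclusion is then immediate: $v$ has no incoming edges in $D'$, so no cycle can pass through $v$, and in particular none can contain an edge of $dj$. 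For (ii), every justified node of $J'$ lies outside $S$, because the operation empties the outgoing edges at exactly the nodes of $S$; by closure of $S$ under incoming edges in $D$, no outgoing edge of such a node points into $S$, so $H'$ agrees with $H$ on all its targets and the weak-winning condition carries over unchanged. The operation only deletes edges, so no new cycles appear, and in particular no cycle of $J'$ uses any of the deleted direct-justification edges.

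For (iii), weak winning outside $v$ is trivial because both $D$ and $H$ are unchanged there; at $v$, the hypothesis provides $dj$ as a direct justification winning $v$ for $\alpha = H(v) = H'(v)$ under $H = H'$. The subtle step is the cycle analysis. Since $D' \setminus D \subseteq dj$ and every edge of $dj$ starts at $v$, any new cycle must use an edge of $dj$ and therefore pass through $v$, which already gives the ``otherwise'' clause. Assuming additionally that $(v,v) \notin dj$ and that no $w \in range(dj)$ can reach $v$ in $J$, I would take any purported new cycle, rotate it so that its first step is a $dj$-edge $(v, w_1)$ with $w_1 \neq v$, and consider the subpath up to the next recurrence of $v$. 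This subpath avoids $v$ internally, so none of its edges starts at $v$, hence none lies in $dj$; consequently every edge of the subpath lies in $D' \setminus dj \subseteq D$, yielding a $J$-path from $w_1 \in range(dj)$ to $v$ and contradicting the hypothesis. The main difficulty is precisely this last step: one must carefully articulate that the replacement of outgoing edges at $v$ preserves every edge not leaving $v$, so that an internally $v$-avoiding subpath witnesses reachability in the original $J$ rather than only in $J'$.
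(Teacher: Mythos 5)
Your proof is correct and takes essentially the same approach as the paper's: exploit that $J'$ differs from $J$ only in the outgoing edges and hypothesis of the modified node(s) to verify weak winning locally, then analyse cycles through the added edges. Your treatment of (iii) is in fact slightly more explicit than the paper's --- rotating a purported new cycle at $v$ and observing that the internally $v$-avoiding subpath lies entirely in $D$, hence witnesses reachability of $v$ from $range(dj)$ in $J$ --- a step the paper's proof leaves implicit.
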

\begin{proof}  
  We exploit the fact that $J$ and $J'$ are very similar. 	

	 (i)  The direct justification $dj$ cannot have an edge ending in $v$ since $H(v)\neq H(w)$ for $(v,w)\in dj$ and no $w\in dj$ can reach $v$ in $J$ since $v$ has no incoming edges, hence $J'$ has no cycles through $dj$. As $J$ is weakly winning and $H$ is updated only in $v$, the direct justification of a justified node $w\neq v$ in $J$  is still winning in $J'$. Since also $dj$ wins $v$ for $\alpha$, $J'$ is weakly winning.     

  (ii) Setting $H(v)$ arbitrary cannot endanger the weak
  support of $J'$ as $v$ has no direct justification and no incoming
  edges in $J'$. Hence $J'$ is weakly winning. Also, removing direct
  justifications cannot introduce new cycles.
  
  (iii) Let $H(v)=\alpha$ and $dj$ wins $v$ for $\alpha$ under $H$.
  Let $J'=J[v:dj,\alpha]$. We have $H'=H$ so the direct justifications of all nodes
    $w \neq v$ in $J'$ win $w$ for $H'(w)$.  Since $dj$ wins $v$ for $H'(v)$,
    $J'$ is weakly winning.  Also, new cycles if any, pass through
  $dj$ and $v$.
\end{proof}

\subsection{Constructing winning justifications}
The eventual goal of a justification is to create a winning justification without unjustified nodes.
Such a justification contains a solution for the parity game without parameters.
To reach this goal we start with an empty winning justification and iteratively assign a direct justification to one of the nodes.

However, haphazardly (re)assigning direct justifications will violate the intended winning justification invariant.
Three problems appear:
First, changing the hypothesis of a node may violate weakly winning for incoming edges.
The easiest fix is to remove the direct justification of nodes with edges to this node.
Yet removing direct justifications decreases the justification progress.
Thus a second problem is ensuring progress and termination despite these removals.
Third, newly created cycles must be winning for the hypothesis.
To solve these problems, we introduce safe justifications; we start with
some auxiliary concepts.

Let $J$ be a justification.
The set of nodes {\em reaching} $v$ in J, including $v$, is closed
under incoming edges and is denoted with $\reaching{J}{v}$. 
The set of nodes {\em reachable} from $v$ in $J$, including $v$, is denoted with
$\reachable{J}{v}$.
We define $\Ext_J(v)$ as the parameters reachable from the node $v$,
formally $\Ext_J(v) = \reachable{J}{v} \cap dom(P)$.
The {\em justification level} $\jl_J(v)$ of a node $v$ is the lowest
priority of all its parameters and $\dominion$ if $v$ has none.
The {\em justification leve}l $jl_J(dj)$ of a direct justification
$dj=\{(v,w_1),\ldots,(v,w_n)\}$ is $min \{\jl_J(w_1), \ldots, \jl_J(w_n)\}$,
the minimum of the justification levels of the $w_i$.
We drop the subscript $J$ when it is clear from the context and write
$\Ext(v)$, $\jl(v)$ and $\jl(dj)$ for the above concepts.  The {\em
  default winner} of a node $v$ is the winner of its priority, i.e.,
$\allied {Pr(v)} $; the {\em default hypothesis} $H_d$ assigns default
winners to all nodes, i.e., $H_d(v)=\allied {Pr(v)}$.

\begin{definition}[Safe justification\label{def:extdom}]
A justification is safe iff (i) it is a winning justification, (ii)
all unjustified nodes $v$ have $H(v)=\defwin(v)$, that is, the winners of the current parameters of the PPG are their default winners, and (iii)
$\forall v \in V : \jl(v) \geq Pr(v)$, i.e., the justification level of
a node is at least its priority.
\end{definition}

Fixing the invariants is easier for safe justifications.  Indeed, for
nodes $w$ on a path to a parameter $v$, $Pr(v) \geq jl(w) \geq Pr(w)$,
so when $v$ is given a direct justification to $w$ then $Pr(v)$ is the
highest priority in the created cycle and $H(v)$ correctly denotes its
winner.
Furthermore, the empty safe justification $(V,\emptyset,\defwin)$ will serve as initialisation of the solving process.

\subsection{The operation Justify} 
To progress towards a solution, we introduce a single operation, namely
$\attr$. Given appropriate inputs, it can assign a direct justification
to an unjustified node or replace the direct justification of a
justified node. Furthermore, if needed, it manipulates the justification in order
to restore its safety. 

\begin{definition}[\attr] \label{def:justify}
  The operation \justify{} is {\em executable} if
  \begin{itemize}
  \item Precondition 1: $J = (V,D,H)$ is a safe justification, $v$ is
    a node in $V$, there exists a player $\alpha$ who wins $v$ by $dj$
    under $H$.
  \item Precondition 2: if $v$ is unjustified in $J$ then $\jl(dj)
    \geq \jl(v)$
    else
    $\jl(dj) > \jl(v)$. \\
  \end{itemize}
  Let \justify{} be executable.  If $H(v) = \alpha$ then $\justify{} = J[v: dj,H(v)]$, i.e., $dj$ becomes the direct justification of $v$.
  
  If $H(v) = \oppon$, then $\justify{} = {J[w:\emptyset,\defwin(w)\mid
      w\in \reaching{J}{v}][v:dj,\alpha]}$, i.e., $\alpha$ wins $v$ by
  $dj$, while all other nodes $w$ that can reach $v$ become unjustified, and
  their hypothetical winner $H(w)$ is reset to their default winner.
\end{definition}

If \justify{} is executable, we say that $v$ is {\em justifiable with $dj$} or {\em justifiable} for short;
when performing the operation, we {\em justify} $v$.

Observe, when \attr\ modifies the hypothetical winner $H(v)$, then, to preserve weak winning, edges $(w,v)$ need to be removed,  which is achieved by removing the direct justification of $w$. Moreover, to preserve (iii) of safety, this process must be iterated until fixpoint and to preserve (ii) of safety, the hypothetical winner $H(w)$ of $w$ needs to be reset to its default winner. This produces a situation satisfying all invariants.
Furthermore, when \attr\ is applied on a justified $v$, it preserves $H(v)$ but it replaces $v$'s direct justification by one with a strictly higher justification level. As the proof below shows, this ensures that no new cycles are created through $v$ so we can guarantee that all remaining cycles still have the correct winner. So, cycles can only be created by justifying an unjustified node.    

\begin{lemma}\label{theo:safe} An executable operation \justify{} returns a safe justification.
\end{lemma}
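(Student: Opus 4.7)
The plan is to argue that the three safety invariants survive the operation, splitting the proof along the two cases of Definition~\ref{def:justify}.

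\textbf{Case 1: $H(v)=\alpha$.} Here $J'=J[v:dj,\alpha]$ changes only the outgoing edges of $v$, not the hypothesis. Weak winning of $J'$ is immediate from Proposition~\ref{prop:operations}(iii). Safety (ii) also carries over unchanged on all nodes except $v$, and $v$ is justified in $J'$ so the constraint does not apply to it. For safety (iii), the key observation is that the outgoing edges of any $u\neq v$ are unchanged, and no node becomes a new parameter, so $\Ext_{J'}(u)\subseteq\Ext_{J}(u)\cup\{v\,\text{if $v$ was a parameter in $J$}\}$; in either subcase this only increases $\jl$, so $\jl_{J'}(u)\geq Pr(u)$ is preserved. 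For $u=v$ itself, I apply precondition~2: if $v$ was unjustified then $\jl_J(v)=Pr(v)$ and $\jl_{J'}(v)=\jl_J(dj)\geq Pr(v)$; if $v$ was already justified then $\jl_{J'}(v)=\jl_J(dj)>\jl_J(v)\geq Pr(v)$.

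The subtle part is the winning condition on new cycles. By Proposition~\ref{prop:operations}(iii), a new cycle must traverse an edge of $dj$ and pass through $v$. If $v$ was already justified, precondition~2 gives $\jl_J(dj)>\jl_J(v)$; since any $w\in\mathrm{range}(dj)$ that could reach $v$ in $J$ would satisfy $\jl_J(w)\leq\jl_J(v)$, no such $w$ exists and no new cycle forms. If $v$ was unjustified, then any non-$v$ node $u$ on the back-path from $w$ to $v$ lies in $\reaching{J}{v}$, so it reaches the parameter $v$ in $J$, giving $\jl_J(u)\leq Pr(v)$; safety of $J$ then forces $Pr(u)\leq\jl_J(u)\leq Pr(v)$, so $Pr(v)$ is the largest priority on the cycle. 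Since $v$ is a parameter of $J$, safety (ii) guarantees $H(v)=\defwin(v)=\allied{Pr(v)}=\alpha$, and weak winning forces every other cycle node to share this hypothesis, so the cycle is won by $\alpha$ as required.

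\textbf{Case 2: $H(v)=\oppon$.} I decompose the operation into $J\to J_1\to J'$, where $J_1$ unjustifies every node of $S:=\reaching{J}{v}$ (resetting its hypothesis to the default winner) and $J'=J_1[v:dj,\alpha]$. Since $S$ is closed under incoming edges, Proposition~\ref{prop:operations}(ii) makes $J_1$ weakly winning. After the first step $v$ has no incoming edges in $J_1$ (any such predecessor lies in $S$ and has had its edges removed), so the second step falls under Proposition~\ref{prop:operations}(i) if $\defwin(v)\neq\alpha$ and under (iii) if $\defwin(v)=\alpha$; in either case $J'$ is weakly winning. The crucial observation is that because all nodes on a path in $D$ share the same hypothesis (Proposition~\ref{prop:weaklywinning}(i)) and $H(w)=\alpha\neq H(v)$ for $w\in\mathrm{range}(dj)$, no $w$ lies in $S$; thus $w$'s outgoing edges and reachable parameters are the same in $J$, $J_1$ and $J'$ (modulo the fact that $v$ is no longer a parameter). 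Since no node has an edge into $v$ in $J_1$, no new cycle can close through $v$, so the winning property is preserved. Safety (ii) holds because every new unjustified node has had $H$ reset to $\defwin$, and safety (iii) for $v$ follows from precondition~2 exactly as in Case~1 (using that $\jl_{J'}$ on the targets of $dj$ equals $\jl_J$ on them).

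The main obstacle is the cycle analysis in Case~1 when $v$ is unjustified: one must combine safety~(iii) of $J$ with $\jl_J(dj)\geq Pr(v)$ to show that $Pr(v)$ dominates the cycle priorities so that $\defwin(v)=\alpha$ gives the correct winner. Everything else reduces to careful bookkeeping of $\reachable{}{\cdot}$ and routine application of Proposition~\ref{prop:operations}. \qed
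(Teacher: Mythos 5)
Your proof takes the same route as the paper's: the same case split on $H(v)=\alpha$ versus $H(v)=\oppon$, reduction of weak winning to Proposition~\ref{prop:operations}, and a new-cycle analysis driven by Precondition~2. The genuinely delicate parts are done correctly: in Case~1 your cycle argument (no new cycles when $v$ was justified; $Pr(v)$ dominates every new cycle when $v$ was unjustified, so $\defwin(v)=H(v)$ wins it) matches the paper, and in Case~2 your observation that no $w\in\mathrm{range}(dj)$ lies in $\reaching{J}{v}$, so the resets do not disturb $dj$, is exactly the point the paper relies on. However, one step is false as stated: in Case~1 you derive invariant~(iii) for nodes $u\neq v$ from the containment $\Ext_{J'}(u)\subseteq\Ext_J(u)\cup\{v \text{ if $v$ was a parameter in }J\}$, on the grounds that ``no node becomes a new parameter.'' It is true that $J'$ has no new parameters, but $J'$ has new \emph{edges}, so reachability grows: any $u$ that reaches $v$ now also reaches, through $dj$, every parameter that a node of $\mathrm{range}(dj)$ reaches in $J$, and these need not lie in $\Ext_J(u)$ at all --- redirecting $v$ towards different, higher-level parameters is precisely the purpose of re-justification. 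Concretely, if $u\to v\to w_1$ in $D$ with $w_1$ a parameter of priority $5$, and $v$ is re-justified with $dj=\{(v,w_2)\}$ for a parameter $w_2$ of priority $7$, then $\Ext_{J'}(u)=\{w_2\}\not\subseteq\{w_1\}=\Ext_J(u)$. So ``this only increases $\jl$'' does not follow from your containment.

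The intended conclusion is still true, but it needs Precondition~2 for $u\neq v$ as well, which is how the paper argues: every parameter $u$ reaches in $J'$ is either a parameter $u$ reaches in $J$ or one that some $w\in\mathrm{range}(dj)$ reaches in $J$, hence $\jl_{J'}(u)\geq\min(\jl_J(u),\jl_J(dj))$; since $u$ reaches $v$ in $J$ we have $\jl_J(u)\leq\jl_J(v)$, and Precondition~2 gives $\jl_J(v)\leq\jl_J(dj)$, so $\jl_{J'}(u)\geq\jl_J(u)\geq Pr(u)$. You already use the key monotonicity fact ($w$ reaches $v$ implies $\jl_J(w)\leq\jl_J(v)$) in your cycle analysis, so the repair is small, but as written the step fails. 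A second, minor, omission: in Case~2 you check invariant~(iii) only for $v$ and $\mathrm{range}(dj)$; you should also note that reset nodes $w\in\reaching{J}{v}\setminus\{v\}$ satisfy it trivially since $\jl_{J'}(w)=Pr(w)$, and that nodes outside $\reaching{J}{v}$ keep $\Ext_{J'}(w)=\Ext_J(w)$.
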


\begin{proof}
Assume \justify{} is executable, $J'=\justify$ and let $\alpha$ be the player that wins $v$ by $dj$.  First, we prove that $J'$ is also a winning justification, i.e., that $J'$ is weakly winning and that the winner of every infinite path in $J'$ is the hypothetical winner $H(w)$ of the nodes $w$ on the path. 

The operations applied to obtain $J'$ are the  ones that have been analysed in Proposition \ref{prop:operations} and for which it was proven that they preserve  weakly winning.
Note that, in case $H(v) = \oppon$, the intermediate justification $J[v: \emptyset, \defwin(v) \mid v \in \reaching{J}{v}]$ removes all incoming edges of $v$. Hence, $J'$ is weakly winning and all nodes $v, w$ connected in $J$ have  $H'(v) = H'(w)$ (*).
If no edge in $dj$ belongs to a cycle, then every infinite path $\play{v_1, v_2, \dots}$ in $J'$ has an infinite tail in $J$ starting in $w \neq v$ which is, since $J$ is winning, won by $H(w)$. By (*), this path is won by $H(v_1) = H(w)$ and $J'$ is winning. 
  
  If $J'$ has cycles through edges in $dj$, then, by (i) of Proposition~\ref{prop:operations}, $H(v)$ must be $\alpha$ and we are in case (iii)
  of Proposition \ref{prop:operations}.
  We analyse the nodes
  $n$ on such a cycle. By safety of $J$, $Pr(n) \leq jl_J(n)$; as $n$
  reaches $v$ in $J$, $jl_J(n) \leq jl_J(v)$. If $v$ is
  unjustified in $J$ then $jl_J(v) = Pr(v) \geq Pr(n)$, hence $Pr(v)$
  is the highest priority on the cycle and $H(v)$ wins the cycle. If
  $v$ is justified in $J$ and $(v,w) \in dj$ is on the new cycle, then
  $jl_J(w) \geq jl_J(dj) > jl_J(v)$ (Precondition 2 of \attr). But $w$
  reaches $v$ so $jl_J(w) \leq jl_J(v)$ , which is a contradiction.

Next, we prove that J' is a safe justification
(Definition~\ref{def:extdom}). (i)  We just proved that $J'$ is a winning justification. 
  (ii) For all unjustified nodes $v$ of $J'$, it holds that $H(v)=\defwin(v)$, its default winner. Indeed, $J$ has this property and whenever the direct justification of a node $w$ is removed, $H'(w)$ is set to $\defwin(w)$. 

(iii) We need to prove  that for all nodes $w$,  it holds that $\jl_{J'}(w) \geq Pr(w)$. We  distinguish between the two  cases of \justify.

(a) Assume $H(v) = \alpha = H'(v)$ and $J'=J[v:dj,H(v)]$ and let $w$ be an arbitrary node of $V$. If  $w$  cannot reach $v$ in $J'$, the parameters that $w$ reaches in $J$ and $J'$ are the same and it follows that $\jl_{J'}(w)=\jl_J(w)\geq Pr(w)$. So, (iii) holds for $w$.  Otherwise, if $w$  reaches $v$ in $J'$, then $w$ reaches $v$ in $J$ and  any parameter $x$ that $w$ reaches in $J'$ is a parameter that $w$ reaches in $J$ or one that an element of $dj$ reaches in $J$. It follows that $\jl_{J'}(w)$ is
at least 
the minimum of $\jl_J(w)$ and $\jl_J(dj)$. As $w$ reaches $v$ in $J$,  $\jl_J(w)\leq\jl_J(v)$. Also,  by  Precondition 2 of \attr, $\jl_J(v)\leq\jl_J(dj)$. It follows that $\jl_{J'}(w)\geq\jl_J(w)\geq Pr(w)$. Thus, (iii) holds for $w$. 

(b)  Assume  $H'(v) \neq H(v) = \oppon$ and $	J'=J[w:\emptyset,\defwin(w)\mid w\in \reaching{J}{v}][v:dj,\alpha]$  then for nodes $w$ that cannot reach $v$ in $J$, $\Ext_{J'}(w) = \Ext_J(w)$  hence $\jl_{J'}(w) =  \jl_J(w) \geq Pr(w)$ and (iii) holds for $w$. All nodes $w\neq v$ that can reach $v$ in $J$ are reset, hence  $\jl_{J'}(w) = Pr(w)$ and (iii) holds. As for $v$, by construction $\jl_{J'}(v) = \jl_J(dj) \geq \jl_J(v)$; also $\jl_J(v) \geq Pr(v)$ hence (iii) also
holds. \qed
\end{proof}

\begin{lemma}\label{theo:monotonic} Let $J$ be a safe justification for a
  parametrized parity game. Unless $J$ defines the parametrized parity
  game $PG_\emptyset=\PG$,  there exists a node $v$ justifiable
  with a direct justification $dj$, i.e., such that \justify{} is executable.
\end{lemma}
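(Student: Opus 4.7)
My plan is to exhibit a justifiable node, namely a parameter $p$ of minimum priority. Since $J$ does not define $\PG$ itself, $dom(P_J)$, the set of unjustified nodes of $J$, is nonempty, so such a $p$ exists.

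First I would show that Precondition~2 of \attr{} becomes automatic for this $p$. Because $p$ is unjustified, it has no outgoing edges in $J$, so the only parameter reachable from $p$ is $p$ itself and $\jl_J(p)=Pr(p)$. Moreover, for every node $w\in V$ each parameter reachable from $w$ in $J$ lies in $dom(P_J)$ and, by minimality of $p$, has priority at least $Pr(p)$; hence $\jl_J(w)\geq Pr(p)$. Consequently, for any direct justification $dj$ built from outgoing edges of $p$, $\jl_J(dj)\geq Pr(p)=\jl_J(p)$, which is exactly Precondition~2 for an unjustified node.

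Next I would produce a $dj$ witnessing Precondition~1. Write $\beta:=\Owns(p)$ and note that $p$ has at least one outgoing edge in $E$. Either some outgoing edge $(p,w)$ has $H(w)=\beta$, in which case $\{(p,w)\}$ is a direct justification winning $p$ for $\beta$ under $H$; or every outgoing edge $(p,w)$ satisfies $H(w)\neq\beta$, in which case the full set of outgoing edges of $p$ is a direct justification winning $p$ for the opponent of $\beta$ under $H$. These two cases are exhaustive, so a suitable $dj$ exists.

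Combining the two steps, $(J,p,dj)$ meets both preconditions, so \justify{} is executable. I do not foresee any real obstacle: the key observation is that fixing $p$ of minimum priority pushes every $\jl_J$ value to be at least $Pr(p)$, which collapses the ordering requirement of Precondition~2 into a triviality, while Precondition~1 reduces to an elementary case split on the hypothetical winners of $p$'s out-neighbours.
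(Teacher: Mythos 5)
Your proof is correct and follows essentially the same route as the paper's: both select an unjustified node of minimal priority among the unjustified nodes, obtain a winning direct justification via the exhaustive case split on the hypothetical winners of its successors, and note that this minimality forces every justification level to be at least $Pr(p)$, so Precondition~2 holds. The only difference is presentational: you spell out the bound $\jl_J(w)\geq Pr(p)$ for all $w$ (and $\jl_J(p)=Pr(p)$) explicitly, where the paper compresses this into the remark that no parameter of smaller priority exists.
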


\begin{proof}
  If $J$ defines the parametrized parity game
  $PG_\emptyset$ then all nodes are justified and $J$ is a solution for the original $\PG$. Otherwise let $p$ be the minimal priority of all
  unjustified nodes, and $v$ an arbitrary unjustified node of priority $p$ and let its owner be 
  $\alpha$.  Then either $v$ has an outgoing edge
  $(v,w)$ to a node $w$ with $H(w) = \alpha$, thus a winning direct
  justification for $\alpha$, or all outgoing edges are to nodes
  $w$ for which $H(w) = \oppon$, thus $v$ has a winning direct justification for
  $\oppon$.  In both cases, this direct justification $dj$ has a
  justification level larger or equal to $p$ since no parameter
  with a smaller priority exist, so \justify{} is executable.
  \qed
\end{proof}

To show progress and termination, we need an order over
justifications.

\begin{definition}[Justification size and order over justifications]
  Let $1,\ldots,n$ be the range of the priority function of a parity
  game $PG$ ( $\pinf > n$) and $J$ a winning justification for a parametrized parity
  game extending $PG$.
  The size of $J$, $s(J)$ is the tuple $(s_{\pinf}(J),s_n(J), \ldots s_1(J))$ where for $i \in \{1,\dots,n,\dominion\}$, $s_i(J)$ is the number of
  justified nodes with justification level $i$. 

  The order over justifications is the lexicographic order over their
  size: with $i$ the highest index such that $s_i(J) \neq s_i(J')$,
  we have $J >_s J'$ iff $s_i(J) > s_i(J')$. 
\end{definition}
The order over justifications is a total order which is bounded as $\Sigma_i s_i(J) \leq |V|$.

\begin{figure} [tb]
\begin{center}
\includegraphics[width=0.6\textwidth]{images/game_para_e_e.pdf}
\end{center}
\begin{center}
\includegraphics[width=0.6\textwidth]{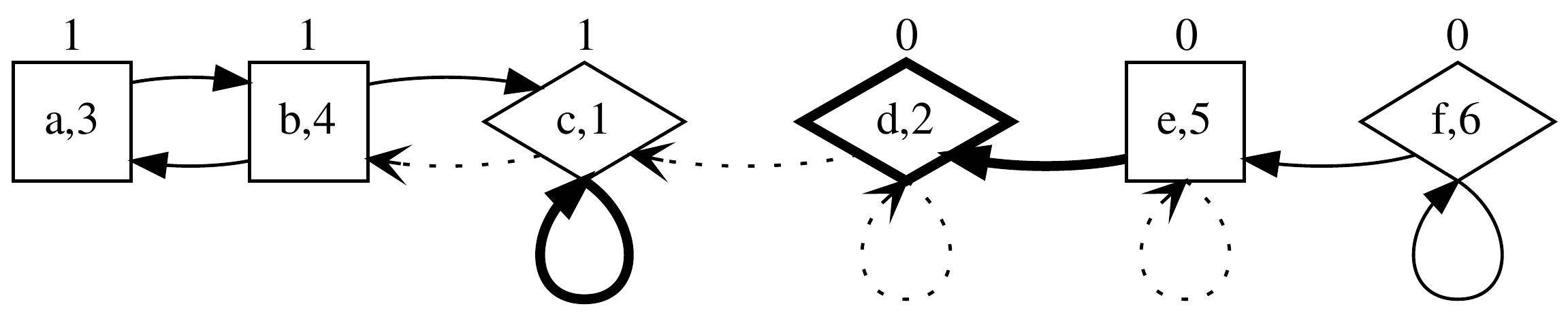}
\end{center}
\caption{Above,
	 in solid line the edges of the justification graph of
  the winning but unsafe justification of Figure~\ref{ex:game_para}
  and below the result of justifying node $a$, a non-winning justification.\label{fig:just}}
\end{figure}

\begin{example}\label{ex:second}  Let us revisit Example~\ref{ex:first}. The winning justification $J$
  of Figure~\ref{ex:game_para} is shown at the top of
  Figure~\ref{fig:just}. For the justified nodes of $J$, we have
  $\jl(b)=3$, $\jl(c)=\dominion$, $\jl(e)=2$ and $\jl(f)=2$. The justification is not safe as, e.g., $\jl(b)=3 <
  Pr(b)=4$. Both unjustified nodes $a$ and $d$ have a winning direct justification, the direct justification $\{(a,b)\}$ wins $a$ for player 1 and the direct justification
  $\{(d,c)\}$ wins $d$ for 1. The figure at the bottom shows the justification resulting from inserting the direct justification winning $a$. There is
  now an infinite path $\play{a,b,a,b,\ldots}$ won by Even but with nodes with hypothetical winner Odd. The justification $\attr(J,a,\set{(a,b)})$ is not winning. This shows that condition (iii) of safety of $J$  is a necessary
  precondition for maintaining the desired invariants.
\end{example}

\begin{lemma}\label{theo:size}
  Let $J$ be a safe justification with size $s_J$, $v$ a node
  justifiable with $dj$ and $J'=\justify{}$ a justification with size
  $s_{J'}$. Then $s_{J'} > s_J$.
\end{lemma}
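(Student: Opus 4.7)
The plan is to prove $s_{J'} >_s s_J$ by a case split matching the two clauses of Definition~\ref{def:justify}: Case~1 is $H(v)=\alpha$ (only $v$'s direct justification changes), and Case~2 is $H(v)=\oppon$ (all nodes in $R:=\reaching{J}{v}$ are reset before $v$ is rejustified). In each case I will compare the two size tuples by locating the highest index $i$ at which they differ and arguing that the change there is positive.

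The backbone of the argument is a monotonicity observation: for every node $w$ justified in both $J$ and $J'$, $\jl_{J'}(w)\ge \jl_J(w)$. For $w\notin R$ reachability is unaffected, since $R$ is closed under backward edges, so $w$ cannot reach any node of $R$ in $J$; the parameters reachable from $w$ are literally the same in $J$ and $J'$, hence $\jl_{J'}(w)=\jl_J(w)$. For $w$ reaching $v$ in $J$---only possible in Case~1, since in Case~2 such $w$ have been reset---I would expand $\Ext_{J'}(w)=(\Ext_J(w)\setminus\{v\})\cup \Ext_{J'}(v)$ and combine it with $\jl_{J'}(v)\ge \jl_J(dj)\ge \jl_J(v)$ from Precondition~2 to get $\jl_{J'}(w)\ge \jl_J(w)$. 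In Case~1 no node becomes unjustified; $v$ either newly enters the size tuple at level $\jl_{J'}(v)\ge \jl_J(v)$ (sub-case $v$ unjustified in $J$) or moves strictly upward from $\jl_J(v)$ to $\jl_{J'}(v)>\jl_J(v)$ (sub-case $v$ justified, using the strict form of Precondition~2 together with the observation that no $z\in\mathrm{range}(dj)$ can then reach $v$ in $J$). In both sub-cases, at the highest touched level the count strictly increases.

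Case~2 is the main obstacle. Here each $r\in R\setminus\{v\}$---necessarily justified in $J$, since an unjustified node has no outgoing edges in $D$---contributes $-1$ at level $\jl_J(r)\le \jl_J(v)$, while $v$ contributes $+1$ at level $\jl_{J'}(v)$; I must show $\jl_{J'}(v)>\jl_J(v)$ strictly. First, $\mathrm{range}(dj)$ is disjoint from $R$: every $z\in\mathrm{range}(dj)$ has $H(z)=\alpha$ (since $dj$ wins $v$ for $\alpha$) whereas by Proposition~\ref{prop:weaklywinning}(i) every $r\in R$ has $H(r)=H(v)=\oppon$; so no $z\in\mathrm{range}(dj)$ reaches $R$ in $J$, and the monotonicity argument specializes to $\jl_{J'}(z)=\jl_J(z)$ for such $z$, giving $\jl_{J'}(v)=\jl_J(dj)$. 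Second, a parity argument sharpens Precondition~2 to a strict inequality even in the weak sub-case where $v$ was unjustified: every parameter $p$ reachable from $z\in\mathrm{range}(dj)$ has $H(p)=\alpha$ (Proposition~\ref{prop:weaklywinning}(i)), so by clause~(ii) of safety $Pr(p)\equiv\alpha\pmod 2$, whence $\jl_J(dj)\equiv\alpha\pmod 2$; symmetrically $\jl_J(v)\equiv\oppon\pmod 2$. Since $\alpha\neq\oppon$ the two values cannot be equal, and combined with $\jl_J(dj)\ge \jl_J(v)$ this forces $\jl_J(dj)>\jl_J(v)$.

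Putting the pieces together, the $+1$ at level $\jl_{J'}(v)$ lies strictly above every $-1$ incurred by the reset of $R\setminus\{v\}$ (and above the displaced $-1$ at $\jl_J(v)$ when $v$ was justified in $J$), so the highest index at which the size tuples differ carries a positive difference, giving $s_{J'}>_s s_J$. The hardest step is the parity observation in Case~2 when $v$ is unjustified: without it Precondition~2 delivers only $\jl_J(dj)\ge \jl_J(v)$, and a loss at exactly level $\jl_J(v)$ could cancel the gain and break the strict lex-increase.
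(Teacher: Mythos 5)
Your proof is correct and follows essentially the same route as the paper's: a case split on the two clauses of \attr{}, monotonicity of justification levels for all non-reset nodes, the same safety-plus-Proposition~\ref{prop:weaklywinning}(i) argument to sharpen $\jl_J(dj)\geq \jl_J(v)$ into a strict inequality in the reset case, and a lexicographic comparison at the highest changed level. The only cosmetic differences are that the paper notes that in your Case~2 the node $v$ is necessarily unjustified (a node cannot have winning direct justifications for both players), which makes your justified sub-case there vacuous, and it phrases your parity congruence more locally, as the impossibility of a parameter of priority exactly $Pr(v)$ being reachable from $\mathrm{range}(dj)$.
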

\begin{proof}
  In case $v$ is unjustified in $J$ and is assigned a $dj$ that wins
  $v$ for $H(v)$, $v$ is not counted for the size of $J$ but is
  counted for the size of $J'$. Moreover, other nodes keep their
  justification level (if they cannot reach $v$ in $J$) or may
  increase their justification level (if they can reach $v$ in
  $J$). In any case, $s_{J'} > s_J$. 
	
  In case $v$ is justified in $J$ and is assigned a $dj$ that wins $v$
  for $H(v)$, then $\jl_J(dj) > \jl_J(v)$, so
  $\jl_J'(v) > \jl_J(v)$.  Other nodes keep their justification level
  or, if they reach $v$, may increase their justification
  level. Again, $s_{J'} > s_J$.  
	
  Finally, the case where $dj$ wins $v$ for the opponent of
  $H(v)$. Nodes can be reset; these nodes $w$ have $\jl_J(w)\leq
  Pr(v)$. As a node cannot have a winning direct justification for
  both players, $v$ is unjustified in $J$. Hence, by precondition (2)
  of \attr, $\jl_J(dj)\geq Pr(v)$. In fact, it holds that $\jl_J(dj)>
  Pr(v)$. Indeed, if some $w\in dj$ would have a path to a parameter
  of $v$'s priority, that path would be won by $\defwin(v)=H(v)$ while
  $H(w)$ is its opponent. Thus, the highest index $i$ where $s_i$
  changes is $\jl_J(dj)$, and $s_i$ increases.  Hence, $s_{J'}>s_J$.
  \qed
\end{proof}

\begin{theorem}\label{theo:basic}
  Any iteration of \attr{} steps from a safe justification, in
  particular from $(V, \emptyset,\defwin)$, with $\defwin$ the default hypothesis, eventually solves $\PG$.
\end{theorem}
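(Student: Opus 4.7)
The plan is to assemble the three preceding lemmas into a termination-plus-correctness argument. First I would verify that the starting point $(V,\emptyset,\defwin)$ is indeed a safe justification: weakly winning holds vacuously (no justified nodes), (ii) holds because $H=\defwin$, and (iii) holds because every node is unjustified with no outgoing edges in $D$, so $\Ext_J(v)$ is contained in $\{v\}\cap dom(P)$ and thus $\jl(v)\in\{Pr(v),\dominion\}\geq Pr(v)$. In particular, when we take $\PG$ itself (no parameters), the empty justification is safe.

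Next I would argue that the invariant of safety is preserved along any sequence of \attr{} steps. This is immediate from Lemma~\ref{theo:safe}: each executable \attr{} step produces a safe justification. Hence, inductively, every $J$ encountered during the iteration is safe, and by Lemma~\ref{theo:size}, each step strictly increases the justification size in the lexicographic order on tuples $(s_{\pinf}(J),s_n(J),\ldots,s_1(J))$.

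The termination step is then a standard well-foundedness argument: each $s_i(J)$ is a non-negative integer bounded by $|V|$, and the lexicographic order on such bounded tuples of fixed length $n+1$ is well-founded. Since the sequence $J_0 <_s J_1 <_s J_2 <_s \cdots$ is strictly increasing, it must be finite. Therefore the iteration reaches some safe justification $J^*$ to which no \attr{} step applies. By the contrapositive of Lemma~\ref{theo:monotonic}, this can only happen when $J^*$ defines the parametrized parity game $\PG_\emptyset=\PG$, i.e.\ when every node of $V$ is justified.

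Finally, because $J^*=(V,D^*,H^*)$ is a safe justification and in particular a winning justification for $\PG_{P_{J^*}}=\PG$, the theorem stated just before Definition~\ref{def:extdom} gives $H^*=W_\PG$ together with winning strategies $\sigma_{J^*,0},\sigma_{J^*,1}$ for both players. Thus the iteration eventually yields a solution of $\PG$. The only non-routine bookkeeping is making sure the "in particular" clause is covered, which is handled by the observation in the first paragraph that $(V,\emptyset,\defwin)$ is safe; everything else is an immediate chaining of the three lemmas, so I do not expect any substantive obstacle. \qed
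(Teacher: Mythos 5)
Your proposal is correct and takes essentially the same route as the paper's own proof: verify that $(V,\emptyset,\defwin)$ is safe, chain Lemma~\ref{theo:monotonic} (a justifiable node exists while unjustified nodes remain), Lemma~\ref{theo:safe} (safety is preserved), and Lemma~\ref{theo:size} (strict size increase) to get termination from boundedness of the size order, then conclude via the winning-justification theorem that the final hypothesis is $W_\PG$ with winning strategies for both players. The only cosmetic difference is that you spell out the base case and invoke Lemma~\ref{theo:monotonic} contrapositively at the terminal justification, whereas the paper phrases the same content as an induction step.
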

\begin{proof}
  By induction: Let $\PG=(V,E,\Owns,Pr)$ be a parity game. Clearly, the  empty justification $J^0=(V,\emptyset,\defwin)$ is a safe justification. This is the base case. 

  Induction step: Let $J^i$ be the safe justification after $i$ successful \attr\ steps and assume that $J^i=(V,D^i,H^i)$ 
  contains an unjustified node.  
  By Lemma~\ref{theo:monotonic}, there exists a pair $v$ and $dj$ such that $v$ is justifiable with $dj$. For {\em any} pair $v$ and $dj$ such that $\attr(J^i,v,dj)$ is executable, let $J^{i+1}=\attr(J^i,v,dj)$. By
  Lemma~\ref{theo:safe}, $J^{i+1}$ is a safe justification. By
  Lemma~\ref{theo:size}, there is a strict increase in size, i.e.,
  $s(J^{i+1}) > s(J^i)$. 

  Since the number of different sizes is bounded, this eventually produces a safe $J^k=(V,D^k,H^k)$ without unjustified nodes. 
  The parametrized parity game $\PG_{P_{J^k}}$ determined by $J^k$ is $\PG$. Hence, $H^k$ is the winning function of $\PG$, and $J^k$ comprises winning strategies for both players. \qed
\end{proof}

Theorem~\ref{theo:basic} gives a basic algorithm  to solve parity games. The algorithm has three features: it is (1) simple, (2) {\em nondeterministic}, and (3) in successive steps it may arbitrarily switch between different priority levels.
Hence, by imposing different strategies, different instantiations of the algorithm are obtained.

Existing algorithms differ in the order in which they (implicitly) justify
nodes. In the next section we simulate such algorithms by
different strategies for selecting nodes to be justified.  Another
difference between algorithms is in computing the set $R$ of nodes that is
reset when $dj$ wins $v$ for the opponent of $H(v)$. Some algorithms
reset more nodes; the largest reset set for which the proofs in this
paper remain valid is $\{w \in V \mid \jl(w) < \jl(dj)\}$. To the best
of our knowledge, the only algorithms that reset as few nodes as
\justify{} are the ones we presented in
\cite{DBLP:conf/vmcai/LapauwBD20}. As the experiments presented there
show, the work saved across iterations by using justifications
results in better performance.
\newcommand{\zielonka}{{\operation{Zielonka}}}
\newcommand{\promote}{{\operation{Promote}}}
\newcommand{\isclosed}{{\operation{Closed}}}
\newcommand{\fixpoint}{{\operation{Fixpoint}}}

\section{A reformulation of three existing algorithms\label{sec:recursive}}

In this section, by ordering justification steps, we obtain basic
versions of different algorithms known from the literature. 
In our versions, we represent the parity game $G$ as $(V,E,O,Pr)$ and the justification J as $(V,D,H)$.
All algorithms start with the safe empty
justification $(V, \emptyset, \defwin)$. The recursive algorithms
operate on a subgame $SG$ determined by a set of nodes $V_{SG}$. This
subgame determines the selection of \justify{} steps that are
performed on $G$. For convenience of presentation, $G$ is considered
as a global constant.

\subsubsection{Nested fixpoint iteration~\cite{DBLP:journals/corr/BruseFL14,DBLP:journals/corr/abs-1909-07659,DBLP:conf/vmcai/LapauwBD20}} is one of the earliest algorithms
able to solve parity games. In Algorithm~\ref{alg:fpi}, we show a
basic form that makes use of our \justify{} action. It starts from the
initial justification $(V,\emptyset,H_d)$. Iteratively, it determines
the lowest priority $p$ over all unjustified nodes, it selects a node
$v$ of this priority and justifies it. Recall from the proof of
Lemma~\ref{theo:monotonic}, that all unjustified nodes of this priority are
justifiable. Eventually, all nodes are justified and a solution is
obtained. For more background on nested fixpoint algorithms and the
effect of justifications on the performance, we refer to our work in
\cite{DBLP:conf/vmcai/LapauwBD20}.

A feature of nested fixpoint iteration is that it solves a parity game
{\em bottom up}. It may take many iterations before it uncovers that the current hypothesis of some high priority unjustified node $v$ is, in fact,
wrong and so that playing to $v$ is a bad strategy for
$\alpha$. The next algorithms are {\em top down}, they start out from
nodes with the highest priority.

\begin{figure}[t]
{\begin{minipage}{0.46\textwidth}
\begin{algorithm}[H]
\algsetup
\Fn{$\fixpoint(G)$}{
    $J \is (V,\emptyset,\defwin)$ the initial safe justification\\
  \While{$J$ has unjustified nodes} {
    $p \is min\set{Pr(v) \given v \text{ is unjustified} }$ \\
    $v \is $ an unjustified node with $Pr(v) = p$ \\
    $\djn \is $ a winning direct justification for $v$ under $H$\\
    $J \is \attr(J, v, \djn)$\\
  }
  \Return{$J$}
}
\caption{A fixpoint algorithm for justifying nodes\label{alg:fpi}}
\end{algorithm}
\end{minipage}~
\begin{minipage}{0.54\textwidth}
\begin{algorithm}[H]
\algsetup
\Input{A parity game $G$}
$J \is \zielonka((V,\emptyset,\defwin), V)$ \label{line:zlk_init}\\
\Fn{$\zielonka(J, V_{SG})$}{
	$p \is max\set{Pr(v) \given v \in V_{SG}}$ \\
	$\alpha \is \allied p$\\
	\While{$true$}{
    	\While {$\exists v \in V_{SG}, dj: v$ is unjustified, $v$ is justifiable with $\djn$ for $\alpha$ with $jl(\djn)\geq p$  \label{line:zlk:greedy}}{
			$J \is \attr(J, v, \djn)$ \label{line:zlk_just}\\
		}
        $V_{SSG} \is \{ v \in V_{SG}| Pr(v)<p,$\\
        ~~~~~~~~~~~~ $v$ is unjustified\}\\
		\lIf{$V_{SSG} = \emptyset$} {\Return{$J$}}
		$J \is \zielonka(J, V_{SSG})$\\
			 \While {$\exists v \in V_{SG}, dj: v$ is unjustified, $v$ is justifiable with $\djn$ for $\oppon$ with $jl(\djn)\geq p+1$  \label{line:zlk:oppon:attr}}{
				$J \is \attr(J, v, \djn)$\\
			}
	}
}
\caption{A \attr{} variant of Zielonka's algorithm.}\label{alg:ZielJust}
\end{algorithm}
\end{minipage}}
\end{figure}

\subsubsection{Zielonka's algorithm~\cite{DBLP:journals/tcs/Zielonka98},}
one of the oldest algorithms, is recursive and starts with a greedy
computation of a set of nodes, called {\em attracted} nodes, in which the winner $\alpha$ of the top priority $p$ has a strategy to force playing to nodes of top priority $p$.  
In our reconstruction, Algorithm~\ref{alg:ZielJust}, attracting nodes is simulated at Line~\ref{line:zlk:greedy} by repeatedly  justifying  nodes $v$ with a direct justification that wins $v$ for $\alpha$ and has a justification level $\geq p$. Observe that the while test ensures that the preconditions of \justify{} on the justification level of $v$ are satisfied. Also, every node \ignore{in $V'$}can be justified at most once.

The procedure is called with a set $V_{SG}$ of nodes of maximal level
$p$ that cannot be attracted by levels $>p$. It follows that the
subgraph determined by 
$V_{SG}$ contains for each of its nodes an outgoing edge (otherwise the opponent of
the owner of the node would have attracted the node at a level $>p$) , hence
this subgraph determines a parity game.
The main loop invariants are that (1) the justification $J$ is safe; (2) the justification level of all justified nodes is $\geq p$ and (3) $\oppon$ has no direct justifications of justification level $> p$ to win an unjustified node in $V_{SG}$.  
The initial justification is safe and it remains so as every $\attr$
call satisfies the preconditions.

After the attraction loop at Line~\ref{line:zlk:greedy}, no more unjustified nodes of $V_{SG}$  can be attracted to level $p$ for player $\alpha$. Then, the set of $V_{SSG}$ of unjustified nodes of priority $<p$ is determined. If this set is empty, then by Lemma~\ref{theo:monotonic} all unjustified nodes of priority $p$ are justifiable with a direct justification $dj$ with $jl(dj)\geq p$, hence they would be attracted to some level $\geq p$ which is impossible.  Thus,  there are no unjustified nodes of priority $p$. In this case,  the returned justification $J$ justifies all elements of $V_{SG}$.   Else,  $V_{SSG}$ is passed in a recursive call to justify all its nodes. Upon return, if $\oppon$ was winning some nodes in $V_{SSG}$, their justification level will be $\geq p+1$. Now it is possible that some unjustified nodes of priority $p$ can be won by $\oppon$ and this may be the start of a cascade of resets and attractions for $\oppon$. The purpose of  Line~\ref{line:zlk:oppon:attr} is to attract nodes of $V_{SG}$ for $\oppon$.
Note that \justify{} resets all nodes that depend on nodes that switch to $\oppon$.
When the justification returned by the recursive call shows that $\alpha$ wins
all nodes of $V_{SSG}$, the yet unjustified nodes of $V_{SG}$ are of priority $p$, are justifiable  by Lemma~\ref{theo:monotonic} and can be won only by $\alpha$. So, at the next iteration, the call to $Attr_\alpha$ will  justify all of them for $\alpha$ and $V_{SSG}$ will be empty.  Eventually the initial call of Line~\ref{line:zlk_init} finishes with a safe justification in which all nodes are justified thus solving the game $G$. 
    
Whereas fixpoint
iteration first justifies low priority nodes resulting in low justification levels, Zielonka's algorithm
first justifies nodes attracted to the highest priority.  Compared to
fixpoint iteration, this results in large improvements in
justification size which might explain its better performance. However, Zielonka's
algorithm still disregards certain opportunities for increasing
justification size as it proceeds by priority level, only returning to
level $p$ when all sub-problems at level $<p$ are completely
solved. Indeed, some nodes computed at a low level $i<\!\!<p$ may have
a very high justification level, even $\dominion$ and might be useful
to revise false hypotheses at high levels, saving much work, but this
is not exploited.  The next algorithm, priority promotion, overcomes
this limitation.

\subsubsection{Priority promotion~\cite{DBLP:conf/cav/BenerecettiDM16,DBLP:conf/hvc/BenerecettiDM16,DBLP:journals/iandc/BenerecettiDM18}}
follows the strategy of Zielonka's algorithm except that, {when it
  detects that all nodes for priority $p$ are justified, it does not
  make a recursive call but returns the set of nodes attracted to
  priority $p$ nodes as a set $R_p$ to a previous level $q$.}
There $R_p$ is added to the attraction set at level $q$ and the attraction process is restarted. In the terminology of  \cite{DBLP:conf/cav/BenerecettiDM16},  the  set  $R_p$ is a  {\em  closed  $p$-region} that is {\em promoted} to level $q$.  A {\em
	closed $p$-region} of $V_{SG}$, with maximal priority $p$, is a subset $R_p \subseteq V_{SG}$  that includes all
nodes of $V_{SG}$ with priority $p$ and for which $\alpha = \allied{p}$ has
a strategy winning all infinite plays in $R_p$ and for which $\oppon$ cannot escape from $R_p$ unless to nodes of higher $q$-regions won by $\alpha$. 
We call the latter nodes the {\em escape nodes} from $R_p$ denote the set of them as $Escape(R_p)$. 
The level to which $R_p$ is promoted is the lowest $q$-region that contains an escape node from $R_p$. 
It is easy to show that $q$ is a lower bound of the justification level of $R_p$. In absence of escape nodes, $R_p$ is promoted to $\dominion$.  

\begin{figure}[tb]
	\begin{minipage}{0.45\textwidth}
		\begin{algorithm}[H]
			\algsetup
			\Input{A parity game $G$ }
			$J \is (V,\emptyset,\defwin)$\\
			\While{$\exists v \in V_G: v$ is unjustified}{
				$R_{\dominion} \is \set{ v \given  jl(v) = \dominion}$ \\
				$V_{SG} \is V \setminus R_{\dominion}$ \\
				$(J, \_,\_) \is \promote(V_{SG},J)$\\
				\While {$\exists v \in V_{SG}, dj: v$ is justifiable with $\djn$ and $jl(\djn) = \dominion$}{
					$J \is \attr(J, v, \djn)$ \label{line:attr_solve2}\\
				}
			}
			\caption{A variant of priority promotion using \attr. \label{alg:Mmprom}}
		\end{algorithm}
	\end{minipage}~
	\begin{minipage}{0.55\textwidth}
		\begin{algorithm}[H]
			\algsetup
			\Fn{$\promote(V_{SG},J)$}{
				$p \is max\set{Pr(v) \given v \in V_{SG}}$ \\
				$\alpha \is \allied{p}$ \\
				\While{$true$}{  
					\While {$\exists v \in V_{SG}, dj: v$ is unjustified or $jl(v)<p$, $v$ is justifiable with $\djn$ for $\alpha$ with $jl(\djn)\geq p$  \label{line:greedy}}{
						$J \is \attr(J, v, \djn)$ \\}
					$R_p \is \set{v \in V_{SG} \given jl(v) \geq p}$ \\ 
					\If{$\isclosed(R_p, V_{SG})$}{
						$l \is min\{q| R_q$ contains an escape node of $R_p\}$ \label{line:escapelevel}\\
						\Return{$(J,R_p,l)$}
					}
					$V_{SSG} \is V_{SG} \setminus R_p$ \\
					$(J,R_{p'},l) \is \promote(V_{SSG}, J)$\\
					\If{$l > p$\label{line:backjump}}{\Return{$(J,R_{p'},l)$}}\label{line:skip}
				}
			}
		\end{algorithm}
	\end{minipage}
\end{figure}

Our variant of priority promotion (PPJ) is in Algorithm~\ref{alg:Mmprom}. 
Whereas \zielonka{} returned a complete solution $J$ on $V_{SG}$, \promote{} returns only a partial $J$ on $V_{SG}$; some nodes of $V_{SG}$ may have an unfinished justification ($jl(v)<\dominion$). 
To deal with this, \promote{} is iterated in a while loop that continues as long as there are
unjustified nodes. 
Upon return of \promote{}, all nodes attracted to the returned $\dominion$-region are justified. 
In the next iteration, all nodes with justification level $\dominion$ are removed from the game, permanently.
Note that when promoting to some $q$-region, justified nodes of justification level $<q$ can remain.
A substantial gain can be obtained compared to the original priority promotion algorithm which does not maintain justifications and loses all work stored in $J$.

By invariant, the function \promote{} is called with a set of nodes $V_{SG}$ that cannot be justified with a direct justification of level larger than the maximal priority $p$. 
The function starts its main loop by attracting nodes for level $p$. 
The attraction process is identical to Zielonka's algorithm except that leftover justified nodes $v$ with $jl(v)<p$ may be rejustified. 
As before, the safety of $J$ is preserved. 
Then $R_p$ consists of elements of $V_{SG}$ with justification level $\geq p$. 
It is tested (\isclosed{}) whether $R_p$ is a closed $p$-region. 
This is provably the case if all nodes of priority $p$ are justified.  
If so, $J$, $R_p$ and its minimal escape level are returned. 
If not, the game proceeds as in Zielonka's algorithm and the
game is solved for the nodes not in $R_p$ which have strictly lower justification level. 
Sooner or later, a closed region will be obtained. 
{Indeed, at some point, a subgame is entered in which all nodes have
the same priority $p$. All nodes are justifiable
(Lemma~\ref{theo:monotonic}) and the resulting region is closed.}
Upon return from the recursive call, it is
checked whether the returned region ($R_{p'}$) promotes to the current
level $p$.
If not, the function exits as well (Line~\ref{line:skip}).
Otherwise a new iteration starts with attracting nodes of justification  level $p$ for $\alpha$. Note that contrary to Zielonka's algorithm, there is no attraction step for $\oppon$:
attracting for $\oppon$ at $p$ is the same as attracting for $\alpha' = \oppon$ at $p' = p+1$.
\subsubsection{Discussion}
Our versions of Zielonka's algorithm and priority promotion use the
justification level to decide which nodes to attract. While
maintaining justification levels can be costly, in these algorithms,
it can be replaced by selecting nodes that are ``forced to play'' to
a particular set of nodes (or to an already attracted node). In the
first attraction loop of \zielonka, the set is initialised with
all nodes of priority $p$, in the second attraction loop, with the
nodes won by $\oppon$; In \promote, the initial set consists also of
the nodes of priority $p$.

Observe that the recursive algorithms implement a strategy to reach as soon
as possible the justification level $\dominion$ for a group of nodes
(the nodes won by the opponent in the outer call of \zielonka, the
return of a closed region ---for any of the players--- to the outer
level in \promote). When achieved, a large jump in justification size
follows. This may explain why these algorithms outperform fixpoint
iteration.

Comparing our priority promotion algorithm (PPJ) to other variants, we
see a large overlap with region recovery (RR)~\cite{DBLP:conf/hvc/BenerecettiDM16} both algorithms avoid
resetting nodes of lower regions. However, RR always resets the full
region, while PPJ can reset only a part of a region, hence can save more
previous work.
Conversely, PPJ eagerly resets nodes while RR only validates the
 regions before use, so it can 
recover a region when the reset escape node is easily re-attracted.
The equivalent justification of such a state is winning but unsafe, thus unreachable by applying $\justify{}$.
However, one likely can define a variant of $\justify{}$ that can  reconstruct RR.
Delayed priority promotion~\cite{DBLP:journals/iandc/BenerecettiDM18} is another variant which prioritises the best
promotion over the first promotion and, likely, can be directly reconstructed.

Tangle learning~\cite{DBLP:conf/cav/Dijk18} is another state of
the art algorithm that we have studied. Space restrictions disallow us
to go in details. We refer to \cite{DBLP:conf/vmcai/LapauwBD20} for a
version of tangle learning with justifications. For a more formal
analysis, we refer to~\cite{LapauwPhD}). Interestingly, the updates of the
justification in the nodes of a tangle cannot be modelled with a
sequence of safe \justify{} steps. One needs an alternative with a
precondition on the set of nodes in a tangle. Similarly as for \justify{},
it is proven in~\cite{LapauwPhD} that the resulting justification is
safe and larger than the initial one.

Justification are not only a way to explicitly model (evolving)
winning strategies, they can also speed up algorithms. We have
implemented justification variants of the nested fixpoint algorithm,
Zielonka's algorithm, priority promotion, and tangle learning. For the
experimental results we refer to \cite{DBLP:conf/vmcai/LapauwBD20,LapauwPhD}.

Note that the data structure used to implement the justification graph
matters. Following an idea of Benerecetti et al.\cite{DBLP:conf/cav/BenerecettiDM16}, our
implementations use a single field to represent the direct
justification of a node; it holds either a single node, or
$\mathit{null}$ to represent the set of all outgoing nodes.
To compute the reset set $R$ of a node, we found two efficient methods to encode the graph $J$: (i) iterate over all incoming nodes in $E$ and test if their justification contains $v$, (ii) store for every node a hash set of every dependent node.
On average, the first approach is better, while the second is more efficient for sparse graphs but worse for dense graphs.

\section{Conclusion}\label{sec:conclusion}
This paper explored the use of justifications in parity game
solving.
First, we generalized parity games by adding parameter
nodes. When a play reaches a parameter it stops in favour of one
player. Next, we introduced justifications and proved that a winning
justification contains the solution of the parametrized parity game.
Then, we introduced safe justifications and a \attr\ operation and
proved that a parity game can be solved by a sequence of \attr\ steps.
A \attr\ operation can be applied on a node satisfying its
preconditions, it assigns a winning direct justification to the node,
resets ---if needed--- other nodes as parameters, preserves safety
of the justification, and ensures the progress of the solving process.

To illustrate the power of \attr, we reconstructed three algorithms:
nested fixpoint iteration, Zielonka's algorithm and priority promotion
by ordering applicable $\attr{}$ operations differently.  Nested
fixpoint induction prefers operations on nodes with the lowest
priorities; Zielonka's algorithm starts on nodes with the maximal
priority and recursively descends; priority promotion improves upon
Zielonka with an early exit on detection of a closed region (a solved
subgame).

A distinguishing feature of a justification based algorithm is that it
makes active use of the partial strategies of both players.
While other algorithms, such as region recovery and tangle learning, use the constructed partial strategies while solving the parity game, we do not consider them justification based algorithms. 
For region recovery, the generated states are not always weakly winning, while tangle learning applies the partial strategies for different purposes.
As shown in \cite{DBLP:conf/vmcai/LapauwBD20} where justifications improve tangle learning, combining different techniques can further improve parity game algorithms.

Interesting future research includes: (i) exploring the
possible role of justifications in the quasi-polynomial algorithm of
Parys~\cite{DBLP:conf/mfcs/Parys19}, (ii) analysing the similarity
between small progress measures
algorithms~\cite{DBLP:conf/spin/FearnleyJS0W17,DBLP:conf/stacs/Jurdzinski00}
and justification level, (iii) analysing whether the increase in
justification size is a useful guide for selecting the most
promising justifiable nodes, (iv) proving the worst-case time
complexity by analysing the length of the longest path in the lattice
of justification states where states are connected by \justify{} steps.
\bibliography{biblio}
\bibliographystyle{splncs04}
\end{document}